\newtheorem{claim}{Claim}
\newtheorem{result}{Result}
\newcommand{\mcL}{\mathcal{L}}
\newcommand{\mcP}{\mathcal{P}}
\newcommand{\mcT}{\mathcal{T}}
\newcommand{\mcI}{\mathcal{I}}
\newcommand{\real}{\mathbb{R}}
\title{Computational Aspects of Equilibria in Discrete Preference Games\footnote{Part of the work was done while the first author was an intern at IBM Research, India. Work funded in part by an OSCP joint project between IBM Research, India and TIFR, and a Ramanujan award and an Early Career Research award of the second author.}}
\author{Phani Raj Lolakapuri}{Tata Institute of Fundamental Research, Mumbai, India}{p.lolakapuri@tifr.res.in}{}{}
\author{Umang Bhaskar}{Tata Institute of Fundamental Research, Mumbai, India}{umang@tifr.res.in}{}{}
\author{Ramasuri Narayanam}{IBM, Bangalore, India}{ramasurn@in.ibm.com}{}{}
\author{Gyana R Parija}{IBM, Bangalore, India}{gyana.parija@in.ibm.com}{}{}
\author{Pankaj S Dayama}{IBM, Bangalore, India}{pankajdayama@in.ibm.com}{}{}
\authorrunning{P.\,R. Lolakapuri, U. Bhaskar, R. Narayanam, G.\,R. Parija and P.\,S. Dayama}
\subjclass{Theory of computation $ \rightarrow $  Network games}
\keywords{Equilibrium computation, Discrete preference games, Coordination games, PLS-completeness, Network games}
\begin{document}

\maketitle

\begin{abstract}
We study the complexity of equilibrium computation in discrete preference games. These games were introduced by Chierichetti, Kleinberg, and Oren (EC '13, JCSS '18) to model decision-making by agents in a social network that choose a strategy from a finite, discrete set, balancing between their intrinsic preferences for the strategies and their desire to choose a strategy that is `similar' to their neighbours. There are thus two components: a social network with the agents as vertices, and a metric space of strategies. These games are potential games, and hence pure Nash equilibria exist. Since their introduction, a number of papers have studied various aspects of this model, including the social cost at equilibria, and arrival at a consensus.

We show that in general, equilibrium computation in discrete preference games is PLS-complete, even in the simple case where each agent has a constant number of neighbours. If the edges in the social network are weighted, then the problem is PLS-complete even if each agent has a constant number of neighbours, the metric space has constant size, and every pair of strategies is at distance 1 or 2. Further, if the social network is directed, modelling asymmetric influence, an equilibrium may not even exist. On the positive side, we show that if the metric space is a tree metric, or is the product of path metrics, then the equilibrium can be computed in polynomial time. 
\end{abstract}

\section{Introduction}
\label{introduction}


Networks are a growing presence in our lives, and affect our behaviour in complex ways. A large amount of literature attempts to understand various facets of these networks. The literature is diverse, due to the large variety of networks and their myriad effects on our daily lives. Prominent among these is the work on opinion formation in social networks~\cite{BalaG98,golub:2010}; algorithms to target agents in a network to promote adoption of a product~\cite{DomingosR01,KempeKT15}; and models that accurately capture the special structure of social networks~\cite{BarabasiA99,WattsS98}.

We study a model of opinion formation in social networks. In a basic but commonly studied model of opinion formation, each agent in the network holds a real-valued opinion, such as her political leaning, and is influenced by her neighbours in the social network. Under the influence of her neighbours, in each time step she updates her opinion to the weighted average of her opinion and that of her neighbours. In a game-theoretic setting, this is a \emph{coordination game}, where players try to coordinate their opinion with their neighbours. Probabilistic models of updation, where the opinions are from the discrete set $\{0,1\}$ are also studied~\cite{CliffordS73}. Much of the work in opinion formation focuses on conditions for consensus, when all agents eventually hold the same opinion~(e.g., \cite{AcemogluDLO11,BalaG98}). Clearly, however, consensus is not always attained in social networks, and the basic model has been extended in different ways to capture this lack of consensus (e.g.,~\cite{yildiz:2013,AptSW16,chierichetti:jcss18}).

Further, most work focuses on the case where the opinion of an agent is either binary (in the set $\{0,1\}$), or in the interval $[0,1]$. These are clearly important, since opinions in many cases (e.g., political leanings, or product adoption) are captured by these sets. However, often more complex sets are required. As an example, a person's political leaning is often a composite of her inclinations on various topics, such as economic inequality, foreign policy, and the tax regime. A more realistic model would consider a person's opinion as a composite of these individual opinions, and update accordingly. As a second example, a person's opinion could be a physical location, such as a choice of which neighbourhood to live in. In this case, the set of strategies would be more complex, and the update process would select the geometric median of the neighbour locations. Another example would be 
in understanding technology adoption from among different platforms such as Android, iOS, Blackberry, etc. The set of strategies are now discrete points, with distances corresponding to the cost of switching from one technology platform to another. 

We study a particular model for opinion formation called a \emph{discrete preference game} \cite{dpg:13,chierichetti:jcss18}.\footnote{A similar model was concurrently studied by Ferraioli et al. \cite{FerraioliGV16}, however with binary strategies. Both these papers give a natural polynomial time algorithm for equilibrium computation with binary strategies.} In this model, an agent can hold one of a finite set of strategies (opinions), and a \emph{distance function} gives the distance between any pair of strategies. A natural restriction on the distance function is that it be a metric, and hence the strategies are points in a metric space. In addition, each agent has an intrinsic \emph{preferred strategy} which is fixed. The cost of each agent for a strategy is the sum of weighted distances to her neighbours and to her preferred strategy. The presence of preferred strategies leads to the absence of consensus as an equilibrium in general \cite{krackhardt:09}. Further, the representation of strategies as points in a metric space allows modelling of many complex situations, beyond the simple settings studied earlier.

Since their introduction, numerous papers have studied various properties of these games, including bounding the ratio of the total cost of equilibria to the minimum total cost (called the Price of Anarchy or Stability), as well as generalisations~\cite{AulettaCFGP16,chierichetti:jcss18}. In a natural updation process, each player in her turn chooses a strategy that minimizes her cost. While it is known that this updation process leads to an equilibrium, the number of turns required may be exponential in the size of the game. 

In this work, we study computational aspects of equilibria in discrete preference games. Equilibrium computation is a fundamental problem in computational game theory, and the lack of efficient algorithms for this is often viewed as a stumbling block to the notion of equilibria as a prediction of player behaviour (e.g.,~\cite{DaskalakisGP09}). Algorithms for computing equilibria are also useful, e.g., in simulations to study properties of equilibria, or to obtain approximations to the global optimum for the underlying distance-minimization problem (e.g.,~\cite{BoykovVZ01}).

Coordination games on graphs are another model closely related to discrete preference games~\cite{AptKRSS17,AptSW16}. In these games, agents attempt to coordinate with their neighbours, however the set of strategies available to each player is restricted. The distance between any pair of strategies is 1, and hence these are similar to discrete preference games with the discrete metric. 



\subsection*{Our Contribution}

We present our results informally here. Formal definitions and results are given in later sections. 

We first show that equilibrium computation in discrete preference games is hard, even if we restrict the number of neighbours that each agent has in the social network.

\begin{result}
It is PLS-hard to find an equilibrium in discrete preference games, even when each player has constant degree in the social network.
\end{result}

If we allow the edges in the social network to be weighted, modelling varying degrees of influence by the neighbours, then equilibrium computation is hard even with multiple restrictions on the metric space.

\begin{result}
In weighted discrete preference games, it is PLS-hard to compute an equilibrium, even when each player has constant degree in the social network, the number of strategies is constant, and the distance between any pair of strategies is one or two.
\end{result}

Our results are interesting because these are examples where equilibrium computation is hard in a purely coordination game. In previous games where hardness was shown for equilibrium computation, there were incentives for anti-coordination, i.e., players had an incentive to choose strategies different from their neighbours (e.g., local max-cut games~\cite{SchafferY91}, congestion games~\cite{FabrikantPT04}, and even coordination-only polymatrix games ~\cite{CaiD11}).

Lastly, we show that if we allow the edges in the social network to be directed, then an equilibrium may not even exist (and hence, the update process described may cycle).

\begin{result}
In a discrete preference game with directed edges, an equilibrium may not exist.
\end{result}

We note that directed edges in social networks are clearly more general, and allow the model to capture asymmetric influences. E.g., Facebook offers one the ability to `follow' another person, which is an asymmetric method of influence. Both undirected and directed social networks are commonly studied (e.g.,~\cite{AptKRSS17,AptSW16,bindel-kleinberg:15,yildiz:2013}).

In our example to show nonexistence of equilibria, the social network consists of a single strongly connected component. In coordination games on graphs, it is known that if the social network has a single strongly connected component then an equilibrium always exists~\cite{AptSW16}. Our work thus shows this does not hold if we allow more complicated metric spaces. 

We show, however, that in two particular cases, an equilibrium can be computed in polynomial time.

\begin{result}
If the metric space is a tree metric, or is the Cartesian product of path metrics, an equilibrium can be computed in polynomial time.
\end{result}

The case of tree metrics was earlier studied, and bounds shown on the Price of Stability~\cite{chierichetti:jcss18}. The authors also motivate tree metrics by an example of students choosing a major in college, when different subjects follow a hierarchy for proximity. The product metric space roughly corresponds to the case when the metric space is a regular grid. A natural scenario that is modelled by the product metric is the case presented in the introduction, where an agent's strategy is a composite of a number of individual opinions, and the distance between two strategies is the sum of distances for each individual opinion.

Our algorithms for these cases are simple, however, they obtain equilibria in substantial generalizations of discrete preference games as well, when the social network is a weighted directed graph, and instead of having a single preferred strategy, agents have multiple preferred strategies with different weights for each. Thus, this result also shows the existence of equilibria in directed discrete preference games, with these metric spaces.

\section{Preliminaries}

In the basic model, a discrete preference game consists of an undirected, unweighted \emph{neighbourhood graph} $G=(V,E)$ representing the social network of $n$ players, and a metric space $\mcL=(L,d)$~\cite{chierichetti:jcss18}. Here, $L$ is the set of strategies, and $d$ is a distance metric ---  $d$ is a function on pairs of strategies that satisfies: (i) $d(x,y) = 0$ iff $x=y$, and is positive otherwise, (ii) $d(x,y) = d(y,x)$, and (iii) $d(x,y) \le d(x,z) + d(y,z)$. Each player $i \in V$ has a \emph{preferred strategy} $s_i \in L$. Since the strategies exist in a metric space, we will also refer to the strategies as points in the metric space. We will use $z_i$ for the strategy of the $i$th player, $z = (z_1, \ldots, z_n)$ for a strategy profile, and $z_{-i}$ for the strategies of all players except $i$. 

Given a parameter $\alpha \in [0,1)$ and a strategy profile $z$, the cost for player $i$ is:

\[
c_i(z) = \alpha d(s_i,z_i) + (1-\alpha) \sum_{j \in N_i} d(z_i, z_j) \, ,
\]

\noindent where $N_i$ is the set of neighbours of $i$, not including $i$ herself. Thus the cost of a strategy $z_i$ for player $i$ is $\alpha$ times the distance from her preferred strategy, plus $(1-\alpha)$ times the total distance from her neighbours. Each player tries to minimise her cost, and hence tries to choose a strategy that is the weighted median of her preferred strategy and the strategies of her neighbours. 


We also study two natural generalisations of the basic model of discrete preference games. In the first generalisation, we allow weights on the edges of the neighbourhood graphs. This models the realistic scenario when different neighbours of a player have different levels of influence on her actions. In this case, for player $i$, the strategy profile $z$ has cost:

\[
c_i(z) = w_i d(s_i,z_i) + \sum_{j \in N_i} w_{ij} d(z_i, z_j) \, ,
\]

\noindent where $w_i$ is the weight player $i$ places on her preferred strategy, and $w_{ij}$ is the weight on the undirected edge $\{i,j\}\in E$.

In the second generalisation, we allow edges to be directed as well as weighted. This naturally models the case when influences are asymmetric: e.g., Facebook, in addition to the option of adding a person as a friend, offers one the ability to `follow' another person, which is an asymmetric method of influence. In this case, the expression for the cost of player $i$ for strategy profile $z$ remains unchanged, though the neighbours of player $i$ are those players that have edges from $i$ in the neighbourhood graph.

An equilibrium is a strategy profile where no player can deviate to a different strategy and reduce her cost. We are interested in algorithms for computing equilibria in discrete preference games. In the weighed setting, these games are \emph{exact potential} games. That is, for every weighted discrete preference game, there is a potential function $\Phi$ of the strategy profile which has the property that if player $i$ deviates from a strategy profile, then the change in player $i$'s cost is exactly the change in the potential function as well. It can be verified that the potential function for the weighted setting is:

\begin{equation}
\Phi(z) = \sum_{i \in V} w_i d(s_i,z_i) + \sum_{\{i,j\} \in E} w_{ij} d(z_i,z_j) \,.
\label{eqn:potential}
\end{equation}

A finite potential game always has an equilibrium, since at the minimum of the potential function, no player has a deviating strategy that reduces her cost. Thus, undirected weighted discrete preference games always possess an equilibrium. Further, \emph{best response dynamics} --- where in each step, a player chooses her minimum cost strategy in response to other players, and deviates to it --- converges to an equilibrium, since in each step the potential function decreases.

However, best-response dynamics may, in general, take exponential time to converge to an equilibrium. We are interested in efficient algorithms for equilibrium computation, that for some polynomial $p(\cdot)$ run in time $O(p(|\mcI|))$ where $|\mcI|$ is the size of input, and returns an equilibrium. This is the subject of Section~\ref{sec:algorithms}. 

In Section~\ref{sec:hardness} we show that in general, the problem of equilibrium computation is hard, by showing that even in many simple cases, equilibrium computation is PLS-complete. The class PLS, for Polynomial Local Search, was introduced to study the complexity of finding a local minimum for problems where local search can be carried out in polynomial time~\cite{JohnsonPY88}. Discrete preference games fall in this class, since finding the equilibrium is equivalent to finding a local minimum for the potential function $\Phi$. The locality of a strategy profile $z$ is the set of all profiles where a single player deviates. By finding the cost of each deviation, for each player, we can obtain a solution with lower value for the potential in polynomial time, if it exists. 

A problem is PLS-complete if it is in PLS and is PLS-hard. PLS-hardness of a problem means that all problems in the class PLS can be polynomially reduced to this problem. Many problems are by now known to be PLS-complete, including local max-cut, max-2SAT, and equilibrium computation in congestion games~\cite{FabrikantPT04,SchafferY91}.


\section{Hardness of Computing Equilibria}
\label{sec:hardness}


We start with two simple cases when an equilibrium can be computed in polynomial time. Firstly, if the parameter $\alpha \le 1/2$, then in any instance where the neighbourhood graph is connected, the following is an equilibrium: all players choose the same strategy $A \in \mcL$. If the neighbourhood graph is disconnected, then each isolated player chooses its preferred strategy, while all players in a connected component choose the same strategy. Secondly, in weighted preference games if the weights on the edges as well as the distance between any two strategies are bounded (above and below) by polynomials in the size of the input $\mcI$, then the equilibrium can be computed in polynomial time by best-response dynamics. In this case, the potential is bounded from above by a polynomial in $|\mcI|$, and in each best-response step, the potential also reduces by a polynomial in $|\mcI|$. Hence best-response dynamics converges to a local minima of the potential function in polynomial time, which is also an equilibrium.

Despite these results, we show that equilibrium computation is in general hard in discrete preference games, even in simple settings. Specifically, we show that in the unweighted setting, for any $\alpha > 1/2$, computing an equilibrium is PLS-complete even when each player has constant degree. In the weighted setting, computing an equilibrium is PLS-complete even when each player has constant degree, the number of strategies is constant, and the distance between every pair of strategies is either $1$ or $2$. For directed neighbourhood graphs, we show that an equilibrium may not even exist.

For the hardness results, we show a reduction from the local max-cut game. In a local max-cut game, we are given an undirected weighted graph with $n$ vertices. Vertices correspond to players, and each player has two strategies $A$ and $B$. The utility of a player $i$ is the sum of weights of edges to players that choose the strategy different from $i$, i.e., $u_i(z) = \sum_{j \in N_i: z_j \neq z_i} w_{ij}$. Equilibrium computation in the max-cut game is known to be PLS-complete, even if each player has degree five~\cite{max-cut_pls}. 

\begin{figure}[ht]
\centering 
\includegraphics[width=0.97\textwidth]{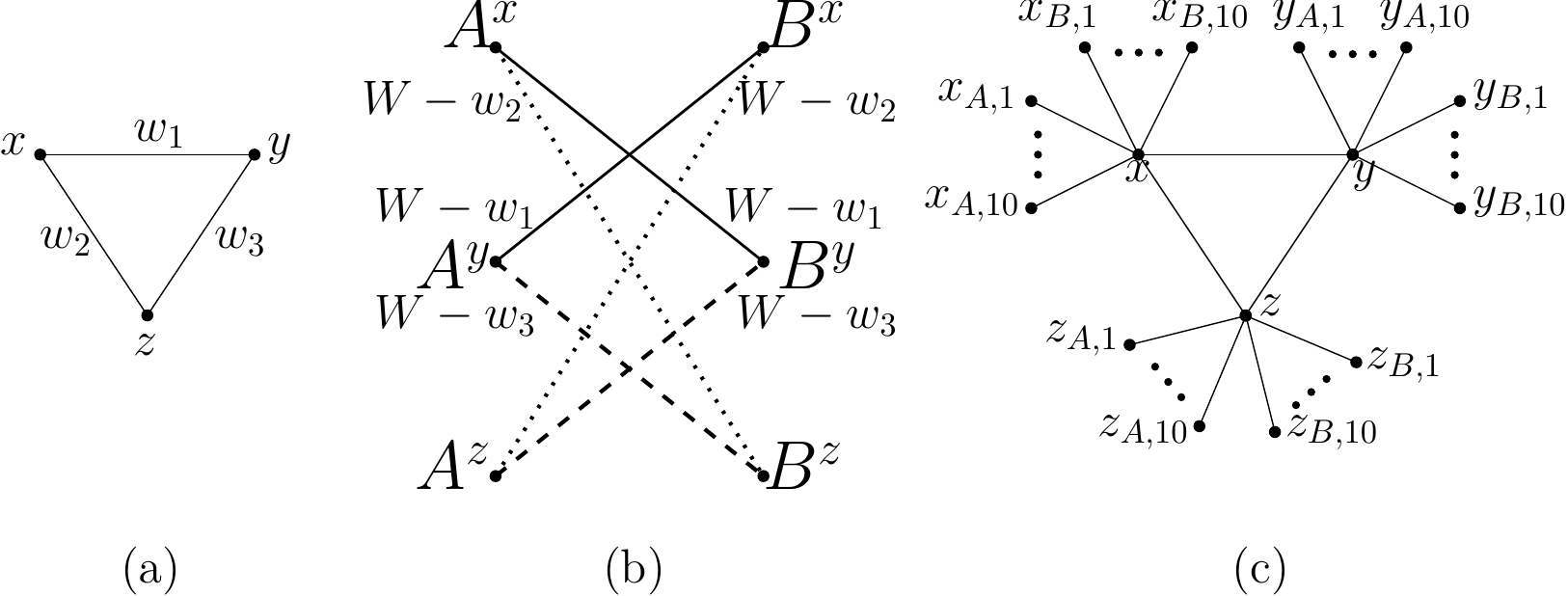}
\caption{An example of the reduction from Theorem~\ref{thm:unweightedhard} (in the theorem, each player has five neighbours in the max-cut instance). Figure (a) shows the max-cut instance, (b) shows the metric space in the reduction, and (c) shows the neighbourhood graph. In (b), all pairs of nodes that do not have an edge displayed between them are at distance $W$, where $W = 5(w_1+w_2+w_3)$.}
\label{fig:example_reduction}
\end{figure}

\begin{theorem}
    For any $\alpha > 1/2$ in the unweighted setting, it is PLS-hard to find an equilibrium in discrete preference games, even when each player has constant degree.
    \label{thm:unweightedhard}
\end{theorem}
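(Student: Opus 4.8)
The plan is to reduce from the local max-cut game with maximum degree five, which is known to be PLS-complete. I am given a weighted graph $H = (V_H, E_H)$ where each vertex is a player with strategies $A$ and $B$, and I must build a discrete preference game whose equilibria correspond to equilibria of the max-cut instance. The central difficulty is the sign flip: in max-cut a player wants to *differ* from her neighbours (anti-coordination), whereas in a discrete preference game every player wants to *agree* with her neighbours (coordination). So I cannot map a max-cut player directly to a single preference-game player; I need a gadget that converts "be far from neighbour $j$" into "be close to some point associated with $j$."

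Let me sketch the approach. Each max-cut player $i$ will be represented by a "main" player in the preference game whose preferred strategy encodes its current choice of $A$ or $B$. The metric space $\mcL$ (cf. Figure~\ref{fig:example_reduction}(b)) will contain, for each edge $\{i,j\}$ of $H$, a pair of points placed so that the geometric/median incentives reproduce the desired anti-coordination. Concretely, I would introduce for each player two candidate "opinion" points and, for each incident edge, auxiliary points at distance $1$ from the two opinion points but at the large distance $W = 5(w_1+w_2+w_3)$ from everything else, as the caption indicates. The role of $W$ is to act as a prohibitively large penalty: any deviation to an "illegal" strategy incurs cost at least $\alpha W$ or involves a far-apart pair, which dominates all the edge weights combined, so in any equilibrium every player confines itself to its intended two-point choice set. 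Because each max-cut player has degree at most five, each main player in the construction has only a constant number of neighbours (its edge-gadget vertices plus perhaps a fixed preference vertex), giving the claimed constant degree.

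**Core lemma and the main obstacle.**

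The heart of the proof is a correspondence lemma: I would show that a strategy profile $z$ is an equilibrium of the discrete preference game if and only if the induced assignment of $A/B$ to the max-cut players is a local optimum of the max-cut objective. For the forward direction I argue that at equilibrium the $W$-penalties force each player onto its legal strategies, and then the player's best-response condition $c_i(z) \le c_i(z_i', z_{-i})$ for the two legal options reduces, after the distance terms telescope, exactly to the max-cut stability condition $u_i(z) \ge u_i(z_i', z_{-i})$. I expect the main obstacle to be calibrating the distances and the parameter $\alpha > 1/2$ so that (i) the gadget exactly reproduces the edge weights $w_{ij}$ in the player's cost differential, and (ii) the $\alpha$ term on the preferred strategy is strong enough to pin the encoding but never strong enough to overturn the neighbour-coordination incentives that carry the max-cut weights. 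The condition $\alpha > 1/2$ is presumably exactly what makes a player prefer matching its own preferred strategy when neighbour incentives are balanced, which is what lets the preferred strategy faithfully encode the choice; getting this threshold to line up with the triangle-inequality constraints on $d$ (so that $\mcL$ is a genuine metric) is the delicate bookkeeping step.

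**Finishing.**

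Once the correspondence lemma is established, PLS-hardness follows immediately: the reduction is clearly polynomial-time (the metric space has $O(|E_H|)$ points and the neighbourhood graph has $O(|V_H| + |E_H|)$ vertices of constant degree), and it is a PLS-reduction because from any equilibrium of the constructed game I can read off, in polynomial time, a local optimum of the max-cut instance via the encoding map. Since local max-cut with degree five is PLS-hard and this holds for every fixed $\alpha > 1/2$, the theorem follows. I would present Figure~\ref{fig:example_reduction} alongside the construction to make the gadget concrete, verify the triangle inequality for the constructed distances as a short separate claim, and then state the correspondence lemma and its proof as the technical core.
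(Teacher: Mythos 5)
You correctly identify the crux --- converting max-cut's anti-coordination into the preference game's coordination --- and your outer skeleton (reduce from degree-five local max-cut, use a huge distance $W$ to confine each player to a two-point ``legal'' set, prove a cost/utility correspondence lemma) matches the paper. But the proposal never actually resolves the crux: the gadget you sketch is left unspecified at exactly the point where the reduction lives, and what little is specified does not work. In the unweighted setting the edge weights $w_{ij}$ can only be encoded in the metric distances, and your construction (``auxiliary points at distance $1$ from the two opinion points, distance $W$ from everything else'') contains no occurrence of $w_{ij}$ at all; you list ``the gadget exactly reproduces the edge weights in the cost differential'' as a calibration step to be done later, but that step \emph{is} the theorem. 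The paper's solution is a specific and quite different device: the metric has two points $A^i, B^i$ \emph{per player}, the neighbourhood graph is the max-cut graph itself (plus degree-one auxiliaries), and for each edge $\{i,j\}$ one sets $d(A^i,B^j)=d(A^j,B^i)=W-w_{ij}$ while all other pairs are at distance $W$. Thus being \emph{close} to a neighbour means holding the \emph{opposite} max-cut label, and $c_i(z)=5W-u_i(z')$ falls out immediately --- no telescoping or delicate calibration is needed.

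Two further points in your write-up are genuinely wrong rather than merely vague. First, you say the main player's \emph{preferred strategy} ``encodes its current choice of $A$ or $B$'': preferred strategies are fixed parameters of the instance, not part of the strategy profile, so they cannot encode anything dynamic; in the paper the main players' preferred strategy is a dummy point $C$ far from everything, deliberately chosen so it \emph{never} influences the $A^i$-versus-$B^i$ decision. Relatedly, your reading of $\alpha>1/2$ is backwards: it is not there to let main players' preferences pin their encoding, but to make the degree-one \emph{auxiliary} players always play their preferred strategy (preference weight $\alpha$ beats the single edge weight $1-\alpha$), which is what enforces the two-strategy restriction. Second, your interaction structure is inconsistent: you place the edge gadgets in the metric space, but points of the metric space are not players and cannot transmit influence between $i$ and $j$; players interact only through edges of the neighbourhood graph. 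If instead your ``edge-gadget vertices'' are intermediate players (as your degree count suggests), then $i$ and $j$ are not adjacent and the correspondence lemma would have to analyze the gadget players' equilibrium behaviour --- a substantially harder construction that you do not carry out, and one the paper avoids entirely by keeping the max-cut edges as direct edges between the main players.
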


\begin{proof}
    Given an instance $G' = (V',E')$ of local max-cut with weights $w'$ on the edges, we construct an instance of a discrete preference game where the strategies are in correspondence with the local max-cut game, and in fact the cost in the discrete preference game is exactly a constant minus the utility in the max-cut game. Let $n'$ be the number of players in either game, and $W = 5 \sum_{e \in E'} w_e'$. We make two assumptions: that each player can be restricted to a subset of strategies, and that some players do not have a preferred strategy. We first describe the reduction under these assumptions, and later show how these assumptions can be removed. With these assumptions, we choose our neighbourhood graph $G(V,E) = G'(V',E')$. The strategy set $L$ contains two strategies $A^i$ and $B^i$ for each player $i$. We assume that $i$ is restricted to these two strategies. Thus, $|L| = 2n'$.  Finally, if $\{i,j\} \in E$, then $d(A^i, B^j)$ = $d(A^j, B^i)$ $= W - w_{ij}$. The distance between any other pair of strategies is $W$. Thus if players $i$ and $j$ both play $A^i$ and $A^j$, or $B^i$ and $B^j$, their distance is $W$.
    
    Figure~\ref{fig:example_reduction} shows the reduction for an instance of max-cut with three vertices $x$, $y$ and $z$.
    
    Note first that the set of players is identical in both games. For every strategy profile $z'$ in the max-cut game, there is a strategy profile $z$ in the discrete preference game where player $i$ plays $A^i$ if she plays $A$ in the max-cut game, and plays $B^i$ otherwise. Then it is easy to see that the cost of player $i$ is $c_i(z) = 5 W - u_i(z')$. There is thus a correspondence between strategy profiles in the two games, and the cost in one is a constant minus the utility in the other. It follows that $z'$ is an equilibrium in the max-cut game if and only if $z$ (as constructed above) is an equilibrium in the discrete preference game.

    We now discuss how to remove the two assumptions. Our first assumption is that a player can be restricted to two strategies. To remove this, for each player $i$, we introduce 20 players: $i_{A,1}, \ldots, i_{A,10}$, and $i_{B,1}, \ldots, i_{B,10}$. We call these \emph{auxiliary} players. Each of these has an edge to player $i$ in the neighbourhood graph, and thus has degree 1. Auxiliary players $i_{A,1}, \ldots, i_{A,10}$ have $A^i$ as their preferred strategy, while auxiliary players $i_{B,1}, \ldots, i_{B,10}$ have $B^i$ as their preferred strategy. Since they have degree 1, and $\alpha > 1/2$, the best response for these players is always to play their preferred strategy. Now note that since each non-auxiliary player $i$ has degree 25 in the neighbourhood graph, if player $i$ plays either $A^i$ or $B^i$, her cost is at most $15W$. However if player $i$ plays a strategy other than $A^i$ or $B^i$, her cost is at least $20(W - \max_e w_e)$ $\ge  16 W$. Hence her best response is always to play either $A^i$ or $B^i$. Further, since the auxiliary players for player $i$ are equally distributed with $A^i$ or $B^i$ as the preferred strategy, their addition does not affect player $i$s choice of strategy between the two, which depends on the strategies chosen by the non-auxiliary players.
    
    Our last assumption is that the non-auxiliary players do not have a preferred strategy. This is removed by introducing another point $C$ into the metric space, which has distance $W$ from all other strategies, and which is the preferred strategy for all non-auxiliary players. However, if $\alpha$ is very large, then it would be an equilibrium for all players to choose $C$. To fix this, increase the number of auxiliary players for each player $i$ from $20$ to $\lceil 20\alpha/(1-\alpha) \rceil$. It can be checked that in this case, player $i$s best response is always to play either $A^i$ or $B^i$. We note that, each player now has degree at most $\lceil 5+20\alpha/(1-\alpha) \rceil$, which is a constant for fixed $\alpha$.
\end{proof}

We now show that if the edges in the neighbourhood graph are weighted, equilibrium computation is hard even in simpler settings.

\begin{theorem}
    In the weighted setting, it is PLS-hard to compute an equilibrium, even when each player has constant degree in the neighbourhood graph, the strategy set has constant size, and the distance between any pair of strategies is either one or two.
    \label{thm:weightedhard}
\end{theorem}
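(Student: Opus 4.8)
The plan is to reduce from local max-cut on graphs of maximum degree five, which is PLS-complete, reusing the overall template of Theorem~\ref{thm:unweightedhard} but pushing all numerical information about the max-cut edge weights $w'$ out of the metric (which now must have constant size with all distances in $\{1,2\}$) and into the weights on the edges of the neighbourhood graph. The difficulty created by the constant-size metric is that we can no longer afford a private pair of strategies $A^i,B^i$ for each player: with a single shared two-point metric, two players \emph{agreeing} on a label would be at distance $0$, which gives genuine coordination rather than the \emph{inverted} coordination (same label far apart, opposite labels close) that the reduction in Theorem~\ref{thm:unweightedhard} exploits. To keep the metric small while preserving this inversion, I would first properly colour $G'$ with $k\le 6$ colours, which is possible since $\Delta(G')\le 5$, and give each colour class $c\in[k]$ its own pair of label-points $A_c,B_c$, for a total of $2k=O(1)$ strategies. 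A player $i$ of colour $c$ is restricted (for now) to $\{A_c,B_c\}$, with $A_c$ encoding max-cut side $A$ and $B_c$ side $B$; since adjacent players receive distinct colours, two neighbours never share a label-pair, so we are free to make agreement ``far''.

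For the metric I would set the distance between points of two different colours to be $2$ if they carry the same label and $1$ if they carry opposite labels, put $d(A_c,B_c)=1$ within each colour, and fix all remaining distances to any value in $\{1,2\}$. Every value lies in $\{1,2\}$ and the diagonal is $0$, so the triangle inequality $d(x,y)\le d(x,z)+d(z,y)$ holds automatically (the right-hand side is at least $2$ whenever $x,y,z$ are distinct), and this is a valid metric. Take the neighbourhood graph to be $G'$ itself with edge weights $w'$. An edge $\{i,j\}$ of weight $w_{ij}$ then contributes $2w_{ij}$ when $i$ and $j$ choose the same label and $w_{ij}$ when they choose opposite labels, so the potential of Equation~\eqref{eqn:potential} equals $2W'-\mathrm{cut}(z')$ with $W'=\sum_e w'_e$, and minimising it is exactly maximising the cut. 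A short calculation shows that player $i$'s preference between $A_c$ and $B_c$ depends only on whether more neighbour-weight currently sits on side $A$ or side $B$, which is precisely her max-cut best response; hence equilibria of the two games correspond strategy-for-strategy, just as in Theorem~\ref{thm:unweightedhard}.

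It remains to remove the two modelling assumptions, as in Theorem~\ref{thm:unweightedhard}: that each player is confined to its two label-points, and that non-auxiliary players have no preferred strategy. The second is easier here than in the unweighted case, since in the weighted model we may take the preferred-strategy weight of each main player negligibly small so that it never affects a best response. For the first, I would attach a constant number of degree-one auxiliary players to each player $i$, half with preferred strategy $A_c$ and half with $B_c$, with a weight hierarchy $w^{\mathrm{pref}}\gg w_{\mathrm{aux}}\gg W'$; the large preferred weight pins each auxiliary player permanently to its label, the balanced split leaves $i$'s $A_c$-versus-$B_c$ comparison untouched, and because $d(A_c,B_c)=1$ every strategy outside $\{A_c,B_c\}$ is strictly more expensive for $i$ (it sits at distance at least $1$ from \emph{both} anchors, whereas $A_c$ or $B_c$ is at distance $0$ from half of them), so $i$'s best response always stays within its two labels. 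Here lies the main obstacle: unlike in Theorem~\ref{thm:unweightedhard}, there is no large distance $W$ available to enforce these restrictions, so \emph{all} the required separation must be generated by the auxiliary edge weights while the geometry stays within $\{1,2\}$; the delicate part is choosing $d(A_c,B_c)=1$ together with the weight hierarchy so that the gap dominates the bounded ($\le 2W'$) contribution from $i$'s $G'$-neighbours. With this in place, each player has degree $5+O(1)=O(1)$, the metric has $2k+O(1)=O(1)$ points, and every distance is $1$ or $2$, giving all the claimed restrictions.
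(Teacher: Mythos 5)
Your proposal is correct and follows essentially the same route as the paper's proof: the same greedy $6$-colouring of the degree-five max-cut instance, the same constant-size label-times-colour metric with distance $1$ across labels and $2$ within a label, the same identification of a player's cost with a constant minus her cut utility, and the same removal of the strategy-restriction assumption via a constant number of heavily weighted degree-one auxiliary players anchored half at $A_c$ and half at $B_c$. The only cosmetic differences are that you phrase the equilibrium correspondence via the potential function rather than per-player costs, and that the paper uses exactly two auxiliary players per vertex with edge weight $10W$ and preferred-strategy weight $11W$, where your coarser weight hierarchy works equally well.
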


\begin{proof}
    As before, given an instance $G' = (V',E')$ of local max-cut with weights $w'$ on the edges and degree five for each vertex, we construct an instance of a discrete preference game where the strategy profiles are in correspondence with the local max-cut instance. Let $n' = |V'|$ be the number of players and $W = 5 \sum_{e \in E'} w_e'$. We first describe the reduction under the assumption that each player can choose one of only two strategies, and later show how the assumption can be enforced without loss of generality. With this assumption, we choose our weighted neighbourhood graph $G(V,E,w) = G'(V',E',w')$. 
    
    To construct the metric space, we use the fact that a graph of maximum degree five can be properly coloured by a greedy algorithm with six colours. That is, every vertex in the graph can be assigned one of six colours, so that if vertices $u$, $v$ are adjacent in the graph, then they are assigned different colours. Thus, the neighbourhood graph can be coloured with six colours. Let $\kappa(v)$ denote the colour assigned to vertex $v \in V$. Let $a$, $b$, $c$, $d$, $e$, and $f$ be the six colours used.
    
    Our metric space $\mcL$ consists of 12 strategies, $\{A,B\} \times \{a,b,c,d,e,f\}$. We call the first component the \emph{parity} of the strategy, and the second component the \emph{colour} of the strategy. The distance between two points is 1 if the parity of the points is different, and is 2 otherwise. We assume that each player $i$ is restricted to the two strategies in the metric space coloured $\kappa(i)$. Note that this means that for a player $i$, since all of her neighbours have a different colour, they cannot be at the same point in the metric space as $i$. Hence the cost of $i$ is at least $\sum_{j \in N_i} w_{ij}$. Further, it is easily seen that in any strategy profile, the cost of a player $i$ is $2 \sum_{j \in N_i} w_{ij}$ minus the weight of the neighbours of $i$ that play the parity different from $i$'s strategy.
    
    For every strategy profile $z'$ in the max-cut game, there is a strategy profile $z$ in the discrete preference game where player $i$ plays $(A,\kappa(i))$ if she plays $A$ in the max-cut game, and plays $(B,\kappa(i))$ otherwise. Then the cost of player $i$ in the discrete preference game is $c_i(z) = 2 \sum_{j \in N_i} w_{ij} - u_i(z')$. There is thus a correspondence between strategy profiles in the two games, and the cost in one is (a constant plus) the negative of the utility in the other. It follows that $z'$ is an equilibrium in the max-cut game if and only if $z$ (as constructed above) is an equilibrium in the discrete preference game.
    
    We remove the assumption in a manner similar to the previous proof, though since the neighbourhood graph is weighted we require fewer auxiliary players. For each existing player $i$, we introduce 2 new players $i_A$ and $i_B$, called auxiliary players. These players have an edge from player $i$ in the neighbourhood graph with weight $W := 10 \sum_{e \in E} w_e$. Each auxiliary player thus has degree 1. Then auxiliary player $i_A$ has $(A,\kappa(i))$ as its preferred strategy with weight $11W$, and $i_B$ has $(B,\kappa(i))$ as its preferred strategy with weight $11W$. Notice that: (1) Since the auxiliary players have degree 1 with an edge of weight $10W$ incident, while the weight they place on their preferred strategy is $11W$, they will always play their preferred strategy. (2) By a simple calculation as in the previous proof, the best response for player $i$ is always to play either $(A,\kappa(i))$ or $(B,\kappa(i))$. The symmetry of the auxiliary players implies that their presence does not affect the choice of $(A,\kappa(i))$ or $(B,\kappa(i))$ for player $i$. This completes the proof.
\end{proof}

We now give an example for a directed neighbourhood graph where an equilibrium does not exist. As before, we first describe our example under the assumption that we can restrict players to a subset of strategies, and then introduce auxiliary players to remove this assumption.	
	
			\begin{figure}[ht]
				\includegraphics[width=.97 \textwidth]{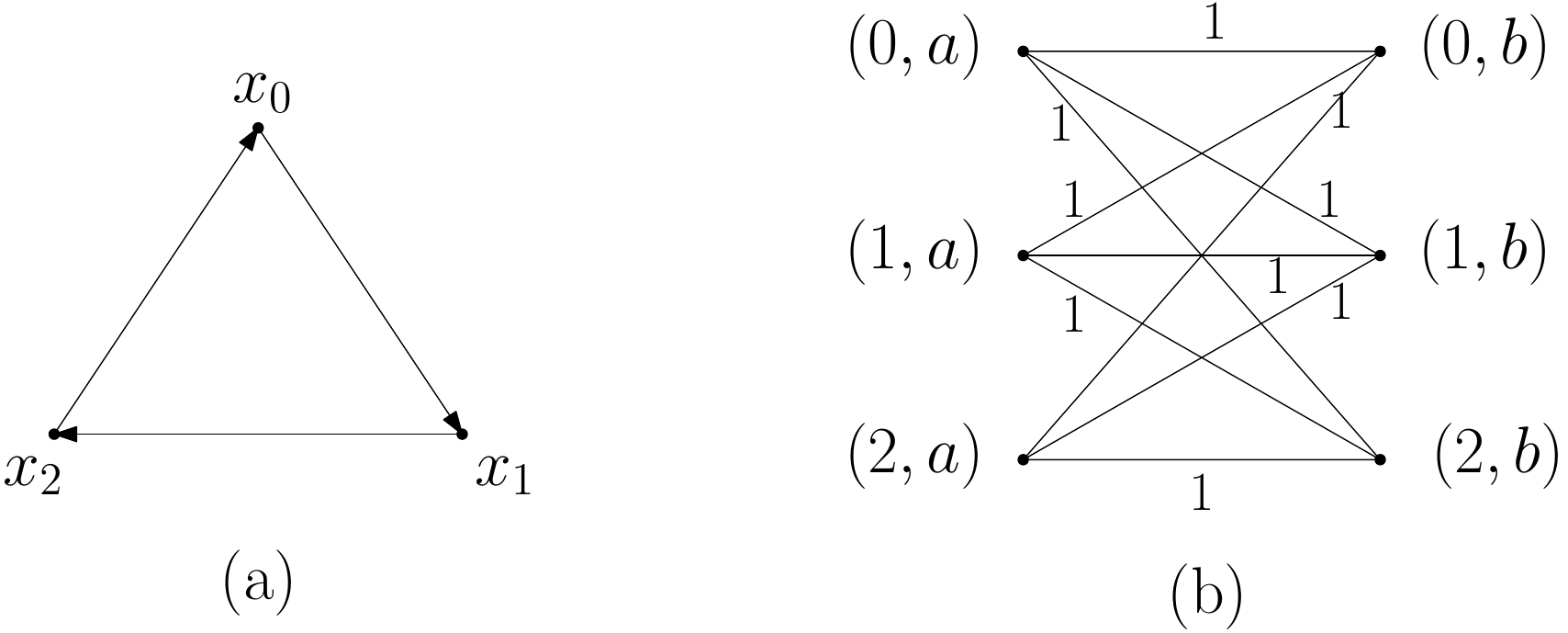}
				\centering
				\caption{The example for non-existence of equilibria in directed discrete preference games. Figure(a) shows the neighborhood graph and  (b) shows the metric space.}
				\label{fig:counter_example}
			\end{figure}

\begin{example}
	With the assumption that we can restrict players to a subset of the strategies, the neighbourhood graph and the metric space for our example are shown in Figure~\ref{fig:counter_example}. There are three players $x_0$, $x_1$, and $x_2$. In the neighbourhood graph, player $x_i$ has an edge to $x_{i+1 \bmod 3}$ (in this example, we always assume $i+1$ is taken $\bmod$ $3$ to avoid repetition). The metric space consists of 6 strategies, $\{0,1,2\} \times \{a,b\}$. We think of the second coordinate as the `parity', and strategies as nodes in a complete bipartite graph with $6$ vertices, with any two strategies of the same parity at distance 2, while any two strategies of different parities at distance 1. By our assumption, we restrict player $x_i$ to strategies $(i,a)$ and $(i,b)$.
	
	From the neighbourhood graph, player $x_i$ wants to be near player $x_{i+1}$ for $i \in \{0,1,2\}$. However, in the metric space, for any (restricted) choice of strategy for $x_{i+1}$, the strategy of $x_i$ that is nearest has the opposite parity. Hence each player $x_i$ tries to choose a strategy of the opposite parity from player $x_{i+1}$, and hence there is no equilibrium.
	
	Lastly, to remove the assumption of strategy restrictions, for each player $x_i$, we add ten new players $(x_i,1), \ldots, (x_i,10)$. In the neighbourhood graph each player $x_i$ has an edge to the ten new players $(x_i,j)$, and the game now has 33 players. For each $i \in \{0,1,2\}$, the first five new players $(x_i,1) \ldots (x_i,5)$ have $(i,a)$ as their preferred strategy, while the other five have $(i,b)$ as their preferred strategy. Since the newly added players only have incoming edges, they always choose their preferred strategy. Then for each player $x_i$, choosing a strategy from the set $\{(i,a), (i,b)\}$ gets cost $5$ from the auxiliary players, and cost at most $4$ from the other two non-auxiliary players. Whereas, any different strategy for player $i$ gets cost at least $15$ from the auxiliary players. Hence player $x_i$ will always choose from the set $\{(i,a), (i,b)\}$ at equilibrium. Since the newly added players are symmetric between these two strategies, they do not further affect $x_i$'s choice of strategy.
\end{example}


\section{Algorithms for Computing Equilibria}
\label{sec:algorithms}


We now give efficient algorithms for computing equilibria in discrete preference games with restrictions on the metric space. However, we allow a significant generalization of the neighbourhood graph. We allow directed, weighted neighbourhood graphs, where instead of a preferred strategy, players have a penalty associated with each point in the metric space. Formally, for each node $v$ in the metric space and each player $i$, there is a real-valued penalty $p_i(v)$. The cost for player $i$ for the strategy profile $z=(z_i,z_{-i})$ is 

\[
c_i(z) = \sum_{v \in L} p_i(v) d(v,z_i) + \sum_{j \in N_i} w_{ij} d(z_i,z_j) \, .
\]

Our results thus show that in the metric spaces discussed below, equilibria exist, even in the case of directed neighbourhood graphs. E.g., this shows that equilibria exist in the case of path metrics.

We discuss metric spaces in more detail now. Any undirected weighted graph on $r$ vertices corresponds to a metric space with $r$ points, where every vertex is a point, and the distance between any pair of points is the weight of the minimum weight path in the graph between the corresponding vertices. Such a metric space is a graph metric. Further, any finite metric space on $r$ points can be represented as a graph metric, by considering the complete graph on $r$ vertices where the weight of the edge between any pair of vertices is the distance between them. 

We first give an algorithm for when the graph metric is a tree, with positive lengths $l_e$ on the edges. Note that this contains the special case when the graph metric is a path. We then generalise path metrics in another direction, by considering the Cartesian product of path metrics. This product metric intuitively is obtained when the graph for the metric space is a regular grid.


\subsection{An algorithm for tree metrics}


\begin{algorithm}[!ht]
\caption{Tree Metric Algo} \label{algo:brd_tree}
\begin{algorithmic}[1]
\Require {Discrete preference game $(G=(V,E,w),\mcT=(L,d))$ where $\mcT$ is a tree metric with root $r$.}
\State {Initially, let $z_i \leftarrow r$ for each player $i$.}
\State {\textbf{while} $\exists$ player $i$ that can reduce her cost by moving to a child $v$ of $z_i$ \textbf{ do } $z_i \leftarrow v$.}
\end{algorithmic}
\end{algorithm}

Our algorithm for tree metrics initially places all players at the root. If any player can improve her cost by moving to a child of her current strategy, the algorithm changes her strategy accordingly. In a metric space with $n$ points, the algorithm terminates in $pn$ iterations of the while loop where $p$ is the number of points, and hence terminates in polynomial time. We now show that when it terminates, the strategy profile is an equilibrium.

To prove convergence, we first characterise the best response. Fix a player $i$ and strategies $z_{-i}$ for the other players. For any node $v$ in the tree, let $w(v)$ be the weight of $i$'s neighbours $j \in N_i$ that have $z_j = v$, plus $i$'s penalty $p_i(v)$ for point $v$. This gives us a tree $T$ with weights on the nodes. We say that cost of node $u$ in the tree is the total weighted distance to the other nodes, i.e., $c(u) = \sum_{v \in T} w(v) \, d(u,v)$. The set of minimum cost nodes in the weighted tree are called the medians of the tree, and are the best responses for player $i$, since $c(u) = c_i(z)$ if $z_i = u$.

We will use the following result which further characterises the medians. Given weights $w(v)$ at the nodes, let $w(T)$ be the total weight at the nodes of the tree, and $T-v$ be the graph obtained by removing node $v$.

\begin{claim}[\cite{chierichetti:jcss18}]
A node $u$ is the median of a tree iff the weight of each connected component of $T-v$ is at most $w(T)/2$.
\end{claim}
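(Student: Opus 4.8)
The plan is to compute the effect of a single-edge deviation and then bootstrap this local calculation into a global optimality statement. Fix the candidate node $u$ (I read the ``$T-v$'' in the statement as $T-u$, the removal of the candidate node itself). Consider an edge $e=\{u,u'\}$ incident to $u$, with length $l_e$, and let $S$ be the connected component containing $u'$ when $e$ is deleted; equivalently, $S$ is the connected component of $T-u$ containing $u'$. Since $T$ is a tree, moving the strategy from $u$ to $u'$ decreases $d(\cdot,v)$ by exactly $l_e$ for every node $v\in S$ and increases it by exactly $l_e$ for every node $v\notin S$. Multiplying by the node weights and summing gives the clean identity
\[
c(u') - c(u) = l_e\bigl(w(T) - 2\,w(S)\bigr),
\]
which is nonnegative precisely when $w(S)\le w(T)/2$. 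As $u'$ ranges over the neighbours of $u$, the sets $S$ range exactly over the connected components of $T-u$, so ``no neighbour of $u$ is a strictly improving deviation'' is literally equivalent to ``every component of $T-u$ has weight at most $w(T)/2$.''

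This already yields the forward direction: if $u$ is a median, then in particular no neighbour improves on it, so the component condition must hold. The substantive direction is the converse, and this is where I expect the main obstacle to lie, since the component condition is a purely \emph{local} (neighbour-only) statement, whereas being a median is a \emph{global} minimization over all nodes. To close this gap I would exploit the nesting structure of the far-side components along a path. Given any target node $x$, take the unique path $u=v_0,v_1,\dots,v_k=x$, and for each edge $e_j=\{v_j,v_{j+1}\}$ let $S_j$ be the component containing $v_{j+1}$ after deleting $e_j$. Walking outward along the path only removes nodes from the far side, so $S_0\supseteq S_1\supseteq\cdots\supseteq S_{k-1}$, and every $S_j$ is contained in $S_0$, the component of $T-u$ that contains $v_1$. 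Hence $w(S_j)\le w(S_0)\le w(T)/2$ for all $j$ by the hypothesis, and the step formula above gives $c(v_{j+1})-c(v_j)=l_{e_j}\bigl(w(T)-2\,w(S_j)\bigr)\ge 0$ at every step. Telescoping these inequalities yields $c(x)\ge c(u)$.

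Since $x$ was arbitrary, $u$ minimizes $c$ over all nodes, i.e.\ $u$ is a median, completing the equivalence. The only delicate point to verify carefully is the monotone nesting $S_0\supseteq S_1\supseteq\cdots$, which rests on the fact that $T$ is a tree (so paths are unique and deleting successive edges along the path only peels off subtrees); this is exactly what turns the one-step computation into a valid certificate of global optimality.
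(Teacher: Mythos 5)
Your proof is correct and complete, but there is no in-paper proof to compare it against: the paper states this claim with a citation to \cite{chierichetti:jcss18} and uses it as a black box. That said, your argument is the natural one, and it is essentially the same machinery the paper deploys for the claim it \emph{does} prove (Claim~\ref{claim:treemonotone}, monotonicity of the cost along the path to a nearest median): the one-edge step identity
\[
c(u')-c(u)=l_e\bigl(w(T)-2\,w(S)\bigr),
\]
coupled with the observation that far-side components nest as one walks outward along a path, so that the local, neighbour-only condition certifies global optimality. Two small points are worth recording. First, as you noted, the ``$T-v$'' in the statement is a typo for ``$T-u$''; your reading is the intended one. Second, your inequality $w(S_j)\le w(S_0)$ silently uses nonnegativity of the node weights; this holds in the paper's setting, where $w(v)$ is a sum of edge weights $w_{ij}$ and penalties $p_i(v)$, but the converse direction of the claim genuinely fails for signed weights, so this hypothesis deserves an explicit mention (likewise, positivity of the edge lengths $l_e$ is what makes the forward direction's division by $l_e$ legitimate). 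With those caveats stated, the equivalence you establish --- $u$ minimizes $c$ if and only if every component of $T-u$ has weight at most $w(T)/2$ --- is exactly the cited fact.
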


\noindent We also use the following claim.

\begin{claim}
Given a tree $T$ with weights at the nodes, let $v$ be an arbitrary vertex and $v^*$ be a median nearest to $v$. Then the cost of the nodes strictly decreases on the path from $v$ to $v^*$.
\label{claim:treemonotone}
\end{claim}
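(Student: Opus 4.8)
The plan is to analyze how the cost $c(\cdot)$ changes along a single edge of the path, and then chain these changes together, using the median characterization of the first Claim to control the sign of each step. First I would fix notation for the path: write the unique path from $v$ to $v^*$ as $v = u_0, u_1, \ldots, u_k = v^*$, and let $l_j > 0$ be the length of the edge $\{u_j, u_{j+1}\}$. Removing this edge disconnects $T$ into two components; call $A_j$ the one containing $v$ (hence $u_0,\ldots,u_j$) and $B_j$ the one containing $v^*$ (hence $u_{j+1},\ldots,u_k$). For any node $x \in A_j$ the path to $u_{j+1}$ passes through $u_j$, so $d(u_{j+1},x) = d(u_j,x) + l_j$; symmetrically, for $x \in B_j$ we have $d(u_j,x) = d(u_{j+1},x) + l_j$. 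Substituting into $c(u) = \sum_x w(x)\, d(u,x)$ yields the identity
\[
c(u_{j+1}) - c(u_j) = l_j\bigl( W(A_j) - W(B_j) \bigr),
\]
where $W(A_j), W(B_j)$ denote the total node weight of the two components, so that $W(A_j) + W(B_j) = w(T)$.

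The core step is to show $W(A_j) \le w(T)/2$ for every $j$, which already gives $c(u_{j+1}) \le c(u_j)$. For this I would observe that every node of $A_j$ lies in the single connected component of $T - v^*$ containing $u_{k-1}$ (the neighbour of $v^*$ toward $v$), since the path from any node of $A_j$ to $v^*$ must pass through $u_{k-1}$. As $v^*$ is a median, the first Claim bounds the weight of that component by $w(T)/2$, and since the node weights are nonnegative, $W(A_j)$ is at most that component's weight, hence at most $w(T)/2$. Combined with the identity above, this gives the weak monotonicity $c(u_{j+1}) \le c(u_j)$.

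It remains to upgrade the weak inequality to a strict one, and this is the step I expect to be the crux, since it is exactly where the hypothesis that $v^*$ is a \emph{nearest} median is used. Equality $c(u_{j+1}) = c(u_j)$ forces $W(A_j) = W(B_j) = w(T)/2$. I would then verify that in this case $u_j$ itself satisfies the median condition of the first Claim: the component of $T - u_j$ on the $v^*$ side is exactly $B_j$, of weight $w(T)/2$, while the remaining components of $T - u_j$ together have weight $W(A_j) - w(u_j) \le w(T)/2$, so each of them individually has weight at most $w(T)/2$. Thus $u_j$ would be a median with $d(v,u_j) < d(v,v^*)$ (since $j < k$ and edge lengths are positive), contradicting the choice of $v^*$ as a median nearest to $v$. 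Hence equality cannot occur at any step, every step is a strict decrease, and the cost strictly decreases along the entire path from $v$ to $v^*$, proving the claim.
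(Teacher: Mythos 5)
Your proof is correct and takes essentially the same route as the paper's: both rest on the edge-by-edge cost-change identity (total weight on the $v$-side versus the $v^*$-side of each edge) together with the median characterization of the first Claim, and in both the strictness ultimately comes from the observation that a side-weight of exactly $w(T)/2$ would make a path node closer to $v$ into a median, contradicting the choice of $v^*$. The only cosmetic difference is that the paper treats the edge incident to $v^*$ separately (a non-median must have strictly larger cost than the median) and gets strict inequalities directly from the non-medianness of the interior path nodes, whereas you treat all edges uniformly via weak decrease plus a contradiction with the nearest-median hypothesis.
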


\begin{proof}
    Let $(v^*=v_0, v_1, \ldots, v_t = v)$ be the path from $v^*$ to $v$. Note that $c(v_1) > c(v_0)$, since $v_i$ is not a median for $i > 0$. Also, all of these nodes (except $v_0$, which is the median) are in the same connected component in $T - v^*$, and the total weight of nodes in this component is $\le w(T)/2$. Now consider any node $v_i$ for $i > 1$. We know that the subtree rooted at $v_i$ has total weight at most $w(T)/2$ (since it is in the same component in $T - v^*$). Since $v_{i-1}$ is not a median, the subtree rooted at $v_i$ must have total weight strictly less than $w(T)/2$. Moving from $v_i$ to $v_{i-1}$ increases the distance from every node in this subtree by the length of the edge $(v_{i-1},v_i)$, and decreases the distance from every other node by this quantity, and hence decreases the cost.
\end{proof}

We now prove convergence of the algorithm.

\begin{theorem}
The Tree Metric Algo terminates at an equilibrium.

\begin{proof}
    Let $z$ be the strategy profile when the algorithm terminates. Suppose for a contradiction that for player $i$, $z_i = v$, while $z_i' = v^*$ is a nearest best response(and so a median) with lower cost. In the following, we consider the weighted tree $T$ with edge lengths as in the metric space, and weights on the nodes, where for any node $v$ in the tree the weight $w(v) = \sum_{j \in N_i: z_j = v} w_{ij}$ $+ p_i(v)$. As earlier, the cost of a node $v$ in the tree is the total weighted distance to the other nodes
    
    Let $(v^*=v_0, v_1, \ldots, v_t = v)$ be the path from $v^*$ to $v$, then by Claim~\ref{claim:treemonotone}, the cost strictly increases along this path, and $c(v_{t-1}) < c(v)$. Since the algorithm terminates, $v_{t-1}$ must be $v$'s parent, hence $v \neq r$. Let $T(v)$ be the subtree rooted at $v$. Consider the timestep when player $i$ moved from $v_{t-1}$ to $v$. Note that this decreases $i$'s distance from every node in $T(v)$ by $l_{v_{t-1},v}$, and increases the distance from every other node by the same length. Since this move decreased $i$'s cost, at that time, the total weight of $i$'s neighbours in $T(v)$ must have been at least $w(T)/2$. Since that time step, players have only moved away from the root, and hence in particular any player that was in $T(v)$ at that timestep must still be in $T(v)$, and hence when the algorithm terminates, the weight of $i$'s neighbours in $T(v)$ must be at least $w(T)/2$. However, since $v^*$ is a median, the weight of $i$'s neighbours in $T(v)$ is also at most $w(T)/2$. Thus $v$ must also be a median, giving us a contradiction.
\end{proof}
\end{theorem}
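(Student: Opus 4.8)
The plan is to show that the profile $z$ at which the algorithm halts is an equilibrium, i.e.\ that every player occupies a best response against the others. Since the best response for player $i$ against $z_{-i}$ is precisely a median of the weighted tree $T$ whose node weights are $w(v) = \sum_{j \in N_i : z_j = v} w_{ij} + p_i(v)$, it suffices to prove that at termination each player sits at a median of her own weighted tree. I would argue by contradiction: suppose player $i$ sits at a node $v$ that is not a median, and let $v^*$ be a nearest median, which by assumption has strictly smaller cost.

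First I would isolate the ``last edge toward the median.'' Let $(v^* = v_0, v_1, \ldots, v_t = v)$ be the tree path from $v^*$ to $v$. By Claim~\ref{claim:treemonotone} the cost strictly decreases from $v$ to $v^*$, so along this path it strictly increases and in particular $c(v_{t-1}) < c(v)$; thus $v$ has a strictly cheaper neighbour $v_{t-1}$. Because the algorithm moves a player only to a \emph{child} of her current node and has nonetheless terminated, this cheaper neighbour cannot be a child of $v$, for otherwise the while-loop would have fired. Hence $v_{t-1}$ is the parent of $v$, and in particular $v \neq r$.

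The crucial structural observation I would extract is a monotonicity invariant: throughout the run every player moves only from a node to one of its children, hence only away from the root. Consequently, once a neighbour of $i$ enters the subtree $T(v)$ rooted at $v$, it stays inside $T(v)$ forever after. I would then examine the timestep at which player $i$ moved from the parent $v_{t-1}$ into $v$. Crossing the edge $(v_{t-1},v)$ decreases $i$'s distance to exactly the nodes of $T(v)$ by $l_{v_{t-1},v}$ and increases it to every other node by the same amount, so the fact that this move reduced $i$'s cost forces the weight of $i$'s neighbours lying in $T(v)$ to have been at least $w(T)/2$ at that moment. By the invariant this weight can only grow, so at termination the weight inside $T(v)$ is still at least $w(T)/2$.

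Finally I would combine this with the median characterisation applied to $v^*$: since $v^*$ is an ancestor of $v$, the connected subtree $T(v)$ lies entirely within one component of $T - v^*$, whose weight is at most $w(T)/2$ because $v^*$ is a median. The two bounds pin the weight of $T(v)$ to exactly $w(T)/2$, and then the same characterisation shows $v$ is itself a median, contradicting that $v^*$ has strictly smaller cost. I expect the main obstacle to be making the monotonicity invariant carry the argument cleanly: I must guarantee that the ``$\geq w(T)/2$'' bound established at an \emph{earlier} timestep transfers verbatim to the \emph{terminal} profile, which rests precisely on the fact that moves are always toward the leaves, so no neighbour can ever leave $T(v)$ once inside. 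Everything else is symmetric bookkeeping of how the single edge length $l_{v_{t-1},v}$ shifts distances across the edge $(v_{t-1},v)$.
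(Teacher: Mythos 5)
Your proposal is correct and takes essentially the same route as the paper's own proof: the same contradiction setup with a nearest median, the same use of Claim~\ref{claim:treemonotone} to force $v_{t-1}$ to be $v$'s parent, the same timestep-plus-monotonicity argument showing the weight inside $T(v)$ is at least $w(T)/2$ at termination, and the same appeal to the median characterisation at $v^*$ to conclude $v$ is itself a median. One minor inaccuracy: $v^*$ need not be an ancestor of $v$ (it may lie in a sibling branch, with the path passing through their least common ancestor); what your argument actually needs --- and what does hold, since $v_{t-1}$ is $v$'s parent and hence $v^* \notin T(v)$ --- is only that the connected subtree $T(v)$ avoids $v^*$ and therefore lies inside a single component of $T - v^*$, whose weight is at most $w(T)/2$.
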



\subsection{An algorithm for the Cartesian product of path metrics}


We now give an algorithm for equilibrium computation if the metric space is the \emph{Cartesian product of path metrics}. As discussed, a \emph{path metric} $\mcP = (L,d)$ can be represented as a path. Alternatively, a path metric can be embedded in the real number line so that the distance between two points is the absolute difference in their values of their embedding. 

A metric space $\mcP = (L,d)$ is the \emph{Cartesian product of path metrics} $\mcP_1 = (L_1, d_1)$, $\ldots$, $\mcP_r = (L_r, d_r)$ (or a product metric, for brevity) if $L = L_1 \times \ldots \times L_r$, and for any two points $x=(x_1, \ldots, x_r)$ and $y = (y_1, \ldots, y_r)$ in $L$, the distance $d(x,y)$ $= \sum_{i=1}^r d_i(x_i,y_i)$. Alternatively, $\mcP$ is the Cartesian product of $r$ path metrics if it can be embedded in $\real^r$, so that the distance between any two points is the $L_1$ distance of their embeddings, and whenever $(x_1, \ldots, x_r)$ and $(y_1, \ldots, y_r)$ are points in the embedding, so are the $2^r$ $\{x_1,y_1\}$ $\times \ldots \times$ $\{x_r,y_r\}$. 

For a discrete preference game on a product metric, for each player $i$, her strategy $z_i$ is a vector, with the $t$th coordinate $z_{i,t}$ denoting her position in the path metric $\mcP_t$.

For the algorithm, we first characterize equilibria. Given a discrete preference game with product metric $\mcP$ and a strategy profile $z$, we say player $i$ is playing her \emph{partial best response in the $t$th metric}
if she is at a median in the path metric $\mcP_t$ (we defined the set of medians earlier, for tree metrics). Note that a player may have multiple best responses.

\begin{claim}
Player $i$ is playing her best response iff she is playing her partial best response in each metric $t \in [r]$.
\label{claim:brdcharacter1}
\end{claim}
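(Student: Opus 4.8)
The plan is to exploit the defining property of the product metric, namely that $d(x,y) = \sum_{t=1}^r d_t(x_t,y_t)$, to show that player $i$'s cost function separates into a sum of $r$ terms, where the $t$th term depends on her strategy $z_i$ only through its $t$th coordinate $z_{i,t}$. Once this separation is in hand, minimizing the total cost reduces to minimizing each term independently, which is precisely the content of the claim.

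First I would substitute the product-metric distance into the cost expression and interchange the order of summation, writing
\[
c_i(z) = \sum_{t=1}^r \Big( \sum_{v \in L} p_i(v)\, d_t(v_t, z_{i,t}) + \sum_{j \in N_i} w_{ij}\, d_t(z_{i,t}, z_{j,t}) \Big).
\]
Each bracketed summand is a function of $z_i$ that depends only on the single coordinate $z_{i,t}$, so $c_i$ is a sum of $r$ functions with disjoint variable dependencies.

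Next I would identify the $t$th summand with a weighted-median cost on the single path metric $\mcP_t$. The point requiring care is the penalty term: the penalties $p_i(v)$ live over the full product space $L$, whereas $d_t(v_t, z_{i,t})$ only sees the $t$th coordinate. I would therefore aggregate penalties coordinate-wise, defining for each $u \in L_t$ an effective node weight $\tilde p_{i,t}(u) = \sum_{v \in L:\, v_t = u} p_i(v) + \sum_{j \in N_i:\, z_{j,t} = u} w_{ij}$. With these weights on the nodes of $\mcP_t$, the $t$th summand is exactly the total weighted distance from $z_{i,t}$ to the other nodes of the weighted path, so it is minimized precisely when $z_{i,t}$ is a median of this weighted path metric — that is, exactly when player $i$ plays her partial best response in metric $t$.

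Finally, since $c_i$ is a sum of $r$ terms, each controlled by a distinct coordinate of $z_i$, the sum is minimized if and only if every term is simultaneously minimized. This yields both directions at once: $z_i$ is a best response iff $z_{i,t}$ is a partial best response in every metric $t \in [r]$. I expect the only genuine obstacle to be the bookkeeping in the penalty aggregation; the separation itself follows immediately from the additivity of the product ($L_1$) metric across coordinates, and the independence of the coordinate subproblems is what makes the coordinatewise minimization valid.
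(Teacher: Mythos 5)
Your proof is correct and takes essentially the same route as the paper's: both rest on the additive ($L_1$) decomposition of the product metric together with the fact that the coordinates of $z_i$ range over a full Cartesian product and can therefore be chosen independently, so minimizing $c_i$ is equivalent to coordinate-wise minimization. The paper states this in two sentences; your version only adds the bookkeeping the paper leaves implicit, namely the interchange of summations and the aggregation of the penalties $p_i(v)$ into node weights $\sum_{v:\, v_t = u} p_i(v)$ on each path $\mathcal{P}_t$.
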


\begin{proof}
    The claim is because the distance between two points in the product metric space $d(x,y)$ $= \sum_{i=1}^r d_i(x_i,y_i)$ is the sum of distances along the individual metric spaces. Further, the position in each path metric can be chosen independently. Hence a player minimizes her cost if and only if she minimizes her cost in each component path metric, i.e., she plays a partial best response in each path metric $\mcP_t$ for $t \in [r]$.
\end{proof}

\begin{algorithm}[!ht]
\caption{Product Metric Algo} \label{algo:brd_reald}
\begin{algorithmic}[1]
\Require {Discrete preference game $(G=(V,E,w),\mcP=(K,d))$ where $\mcP$ is the Cartesian product of path metrics $\mcP_1$, $\ldots$, $\mcP_r$.}
\State {Initially, let $z_i \leftarrow s_i$ for each player $i$.}
\For{$k \in [r]$}
    \State {Use algorithm Tree Metric Algo to obtain an equilibrium for the players in the path metric $\mcP_k$.}
    \State {For each player, set $z_{i,k}$ to her position in $\mcP_k$ in the equilibrium computed.}
\EndFor
\end{algorithmic}
\end{algorithm}

Since the Tree Metric Algo terminates in polynomial time, so does the Product Metric Algo.

\begin{theorem}
The Product Metric Algo terminates at an equilibrium.
\end{theorem}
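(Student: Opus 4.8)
The plan is to reduce the correctness of the Product Metric Algo to the already-established correctness of the Tree Metric Algo, handling one coordinate at a time, and then invoke Claim~\ref{claim:brdcharacter1}. That claim says a player is at her best response exactly when she is simultaneously at a partial best response (a median) in each of the $r$ path metrics. So it suffices to prove that, when the algorithm terminates, every player is at a partial best response in every path metric $\mcP_k$, $k \in [r]$.

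First I would make the additive structure of the cost explicit. Using $d(x,y) = \sum_{t=1}^r d_t(x_t,y_t)$, the cost of player $i$ splits as
\[
c_i(z) = \sum_{t=1}^r \Big[ \sum_{u \in L_t} q_i^t(u)\, d_t(u, z_{i,t}) + \sum_{j \in N_i} w_{ij}\, d_t(z_{i,t}, z_{j,t}) \Big],
\]
where $q_i^t(u) := \sum_{v \in L:\, v_t = u} p_i(v)$ is the penalty mass of player $i$ projected onto point $u$ of $\mcP_t$. The $t$th bracket depends only on the $t$th coordinates of the players, and is precisely the cost function of a discrete preference game on the path (hence tree) metric $\mcP_t$, with the same neighbour set and edge weights $w_{ij}$ but with penalties $q_i^t$. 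Thus the $k$th iteration genuinely solves such a game, and by the correctness of the Tree Metric Algo it returns an equilibrium of that game; at that equilibrium every player sits at a median of $\mcP_k$, i.e.\ at a partial best response in the $k$th metric.

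The second step is to show this property, once achieved for coordinate $k$, survives to termination. The crucial point is the coordinate independence visible above: whether a player is at a partial best response in $\mcP_k$ is determined entirely by the $k$th coordinates $z_{\cdot,k}$, while the later iterations $k+1, \ldots, r$ overwrite only the coordinates $z_{\cdot,k+1}, \ldots, z_{\cdot,r}$ and leave $z_{\cdot,k}$ untouched (and the penalties $q_i^t$ are fixed, independent of the other coordinates). Hence each partial best response established is preserved, and at termination every player is simultaneously at a partial best response in all $r$ metrics. Claim~\ref{claim:brdcharacter1} then upgrades this to a full best response for every player, so the returned profile is an equilibrium.

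I expect the only substantive content to be the decomposition and the coordinate independence it yields; everything else is a black-box appeal to the Tree Metric Algo theorem applied to each $\mcP_k$. The one detail I would verify is that the projected penalties $q_i^t$ retain whatever nonnegativity the median characterization needs, which is immediate since each $q_i^t(u)$ is a sum of the original penalties $p_i(v)$. A minor bookkeeping remark is that since the per-coordinate game reuses the original edges and weights and only replaces the penalties by their projections, no structural feature of the instance is lost in the reduction.
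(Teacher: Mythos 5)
Your proof is correct and follows essentially the same route as the paper's: apply the Tree Metric Algo coordinate by coordinate, observe that later iterations do not disturb earlier coordinates, and invoke Claim~\ref{claim:brdcharacter1} to upgrade simultaneous partial best responses to an equilibrium. Your explicit additive decomposition of the cost with projected penalties $q_i^t$ is a useful elaboration of what the paper leaves implicit, but it is not a different argument.
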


\begin{proof}
    Let $z$ be the strategy profile when the algorithm terminates. For each player $i$, the $t$th component $z_{i,t}$ is set in the $t$th iteration of the for loop, and after this iteration each player is playing a partial best response in $\mcP_t$. By Claim~\ref{claim:brdcharacter1}, after the last iteration of the for loop, the strategy profile is an equilibrium.
\end{proof}

\subsection*{Conclusion} 

Our work is the first to study the basic question of efficient equilibrium computation in discrete preference games. We show that despite incentivizing coordination, in general equilibrium computation is PLS-hard. However with restrictions on the metric space, equilibrium may be computed in polynomial time, even in very general settings for the neighbourhood graph. Our work is a first step, and leaves open many interesting problems. As an example, for what other metric spaces can we find an equilibrium efficiently? Another interesting direction would be to place restrictions on the neighbourhood graph to better represent real-life social networks, and study if these make equilibrium computation any easier. With the growing popularity of this and other models of opinion formation, we feel these are important, fundamental questions.

\subsection*{Acknowledgement}
We thank Harit Vishwakarma and Rakesh Pimplikar for interesting discussions at the initial stages of this project.

\appendix

\bibliography{ijcai19}

\begin{thebibliography}{10}

\bibitem{AcemogluDLO11}
Daron Acemoglu, Munther~A Dahleh, Ilan Lobel, and Asuman Ozdaglar.
\newblock Bayesian learning in social networks.
\newblock {\em The Review of Economic Studies}, 78(4):1201--1236, 2011.

\bibitem{AptKRSS17}
Krzysztof~R. Apt, Bart de~Keijzer, Mona Rahn, Guido Sch{\"{a}}fer, and Sunil
  Simon.
\newblock Coordination games on graphs.
\newblock {\em Int. J. Game Theory}, 46(3):851--877, 2017.

\bibitem{AptSW16}
Krzysztof~R. Apt, Sunil Simon, and Dominik Wojtczak.
\newblock Coordination games on directed graphs.
\newblock In {\em Proceedings Fifteenth Conference on Theoretical Aspects of
  Rationality and Knowledge, {TARK} 2015, Carnegie Mellon University,
  Pittsburgh, USA, June 4-6, 2015.}, pages 67--80, 2015.

\bibitem{AulettaCFGP16}
Vincenzo Auletta, Ioannis Caragiannis, Diodato Ferraioli, Clemente Galdi, and
  Giuseppe Persiano.
\newblock Generalized discrete preference games.
\newblock In {\em Proceedings of the Twenty-Fifth International Joint
  Conference on Artificial Intelligence, {IJCAI} 2016, New York, NY, USA, 9-15
  July 2016}, pages 53--59, 2016.

\bibitem{BalaG98}
Venkatesh Bala and Sanjeev Goyal.
\newblock Learning from neighbours.
\newblock {\em The Review of Economic Studies}, 65(3):595--621, 1998.

\bibitem{BarabasiA99}
Albert-L{\'a}szl{\'o} Barab{\'a}si and R{\'e}ka Albert.
\newblock Emergence of scaling in random networks.
\newblock {\em Science}, 286(5439):509--512, 1999.

\bibitem{bindel-kleinberg:15}
David Bindel, Jon~M. Kleinberg, and Sigal Oren.
\newblock How bad is forming your own opinion?
\newblock {\em Games and Economic Behavior}, 92:248--265, 2015.

\bibitem{BoykovVZ01}
Yuri Boykov, Olga Veksler, and Ramin Zabih.
\newblock Fast approximate energy minimization via graph cuts.
\newblock {\em {IEEE} Trans. Pattern Anal. Mach. Intell.}, 23(11):1222--1239,
  2001.
\newblock URL: \url{https://doi.org/10.1109/34.969114}, \href
  {http://dx.doi.org/10.1109/34.969114} {\path{doi:10.1109/34.969114}}.

\bibitem{CaiD11}
Yang Cai and Constantinos Daskalakis.
\newblock On minmax theorems for multiplayer games.
\newblock In {\em Proceedings of the Twenty-Second Annual {ACM-SIAM} Symposium
  on Discrete Algorithms, {SODA} 2011, San Francisco, California, USA, January
  23-25, 2011}, pages 217--234, 2011.

\bibitem{dpg:13}
F.~Chierichetti, J.~Kleinberg, and S.~Oren.
\newblock On discrete preferences and coordination.
\newblock In {\em Proceedings of the 14th ACM Conference on Electronic Commerce
  (ACM EC)}, pages 233--250, 2013.

\bibitem{chierichetti:jcss18}
Flavio Chierichetti, Jon~M. Kleinberg, and Sigal Oren.
\newblock On discrete preferences and coordination.
\newblock {\em J. Comput. Syst. Sci.}, 93:11--29, 2018.
\newblock URL: \url{https://doi.org/10.1016/j.jcss.2017.11.002}, \href
  {http://dx.doi.org/10.1016/j.jcss.2017.11.002}
  {\path{doi:10.1016/j.jcss.2017.11.002}}.

\bibitem{CliffordS73}
Peter Clifford and Aidan Sudbury.
\newblock A model for spatial conflict.
\newblock {\em Biometrika}, 60(3):581--588, 1973.

\bibitem{DaskalakisGP09}
Constantinos Daskalakis, Paul~W. Goldberg, and Christos~H. Papadimitriou.
\newblock The complexity of computing a nash equilibrium.
\newblock {\em Commun. {ACM}}, 52(2):89--97, 2009.
\newblock URL: \url{https://doi.org/10.1145/1461928.1461951}, \href
  {http://dx.doi.org/10.1145/1461928.1461951}
  {\path{doi:10.1145/1461928.1461951}}.

\bibitem{DomingosR01}
Pedro Domingos and Matt Richardson.
\newblock Mining the network value of customers.
\newblock In {\em Proceedings of the Seventh ACM SIGKDD International
  Conference on Knowledge Discovery and Data Mining (KDD '01)}, pages 57--66.
  ACM, 2001.

\bibitem{max-cut_pls}
Robert Els{\"a}sser and Tobias Tscheuschner.
\newblock Settling the complexity of local max-cut (almost) completely.
\newblock In {\em ICALP}, 2011.

\bibitem{FabrikantPT04}
Alex Fabrikant, Christos~H. Papadimitriou, and Kunal Talwar.
\newblock The complexity of pure nash equilibria.
\newblock In {\em Proceedings of the 36th Annual {ACM} Symposium on Theory of
  Computing, Chicago, IL, USA, June 13-16, 2004}, pages 604--612, 2004.

\bibitem{FerraioliGV16}
Diodato Ferraioli, Paul~W. Goldberg, and Carmine Ventre.
\newblock Decentralized dynamics for finite opinion games.
\newblock {\em Theor. Comput. Sci.}, 648:96--115, 2016.
\newblock URL: \url{https://doi.org/10.1016/j.tcs.2016.08.011}, \href
  {http://dx.doi.org/10.1016/j.tcs.2016.08.011}
  {\path{doi:10.1016/j.tcs.2016.08.011}}.

\bibitem{golub:2010}
B.~Golub and M.O. Jackson.
\newblock Naïve learning in social networks: Convergence, influence, and the
  wisdom of crowds.
\newblock {\em American Economics Journal: Microeconomics}, 2(1):112--149,
  2010.

\bibitem{JohnsonPY88}
David~S. Johnson, Christos~H. Papadimitriou, and Mihalis Yannakakis.
\newblock How easy is local search?
\newblock {\em J. Comput. Syst. Sci.}, 37(1):79--100, 1988.

\bibitem{KempeKT15}
David Kempe, Jon~M. Kleinberg, and {\'{E}}va Tardos.
\newblock Maximizing the spread of influence through a social network.
\newblock {\em Theory of Computing}, 11:105--147, 2015.
\newblock URL: \url{https://doi.org/10.4086/toc.2015.v011a004}, \href
  {http://dx.doi.org/10.4086/toc.2015.v011a004}
  {\path{doi:10.4086/toc.2015.v011a004}}.

\bibitem{krackhardt:09}
D.~Krackhardt.
\newblock A plunge into networks.
\newblock {\em Science}, 326:47--48, 2009.

\bibitem{SchafferY91}
Alejandro~A. Sch{\"{a}}ffer and Mihalis Yannakakis.
\newblock Simple local search problems that are hard to solve.
\newblock {\em {SIAM} J. Comput.}, 20(1):56--87, 1991.
\newblock URL: \url{https://doi.org/10.1137/0220004}, \href
  {http://dx.doi.org/10.1137/0220004} {\path{doi:10.1137/0220004}}.

\bibitem{WattsS98}
Duncan~J Watts and Steven~H Strogatz.
\newblock Collective dynamics of ‘small-world’ networks.
\newblock {\em Nature}, 393(6684):440, 1998.

\bibitem{yildiz:2013}
Mehmet~Ercan Yildiz, Asuman~E. Ozdaglar, Daron Acemoglu, Amin Saberi, and Anna
  Scaglione.
\newblock Binary opinion dynamics with stubborn agents.
\newblock {\em ACM Trans. Economics and Comput}, 1(4), 2013.

\end{thebibliography}

\end{document}